\def\argmax{\operatorname{arg~max}}
\def\E{\mathbb{E}}
\def\ie{{\em i.e.}}
\newtheorem{theorem}{Theorem}
\newtheorem{prop}{Proposition}
\def\BibTeX{{\rm B\kern-.05em{\sc i\kern-.025em b}\kern-.08em
    T\kern-.1667em\lower.7ex\hbox{E}\kern-.125emX}}
\begin{document}

\title{Dimensioning an Indoor SISO RIS-system: Approximations and Equivalence Models\\
}

\author{
\IEEEauthorblockN{Saksham Bhushan, Sai Teja Suggala, Arzad A. Kherani, Sreejith T. V.}\\
\IEEEauthorblockA{\textit{Dept. of Electrical Engineering and Computer Science \\
Indian Institute of Technology Bhilai, India }\\
Email: \{sakshamb, suggalat, arzad.alam, sreejith\}@iitbhilai.ac.in}
}


\maketitle

\begin{abstract}

We provide closed-form approximations to the performance gain achieved in a RIS-assisted communication. We then consider a network deployment of RIS and Transmitter-Receiver pairs and use these approximate expressions to provide equivalence models which state that the performance of a RIS-equipped network is similar to the performance of an appropriately spatially scaled system. We provide a way of assigning available RIS to assist communication between several transmitter-receiver pairs. Several such approximations are expected to spawn from this study, with more clarity on structural aspects of the gains achieved in RIS-assisted communications.

\end{abstract}

\begin{IEEEkeywords}
Reconfigurable intelligent surface (RIS), 6G, channel modelling, millimetre wave
\end{IEEEkeywords}

\section{Introduction}


Reconfigurable Intelligent Surfaces (RIS) are man-made surfaces with small passive electromagnetic structures which can be controlled electronically through a controller to achieve unconventional results.
%
%
RIS is being compared to some breakthrough concepts in the field of wireless communication \cite{ris_how, towards}. However, for RIS to be impactful in future wireless networks, it needs to seamlessly integrate with existing wireless communication technology \cite{an_idea} and how it outperforms existing similar technologies such as relay \cite{beat_relay, compare_relay}.
Some interesting applications of RIS include active beamforming at the access point and passive beamforming of RIS elements to enhance the received power \cite{act_pass_beam}, another one being \cite{ap_trans} where an umodulated carrier wave is impinged upon the RIS and the phase shifters work as a signal modulator.

Various phase estimation and optimization algorithms have been proposed  for enhanced received power and eventually channel capacity \cite{achieve_max, dl_phase, low_comp}. Although, it is not always possible to perfectly estimate the phase, \cite{phase_error} does performance analysis on such a system. The reliability of the reconfigurability of RIS is examined in \cite{error_model} by proposing an error model and a general methodology for error analysis but this is out of scope of this paper. In \cite{power_scaling}, the reliability problem in programmable metamaterials analysed with error model and potential impacts of faults are studied. The RIS placement, number of elements, tilt or rotation are considered in \cite{2021_performance} using simulations.


In this paper, we use a corridor model for RIS and Tx-Rx deployment and we further our discussion with approximating the system for improvement in system throughput via a RIS-assisted channel and also provide a scalable model for this approximation. We validated these approximations, equivalence models and scaling laws using SimRIS simulator \cite{basar2020SimRIS}.

\section{Network Scenario and Signal Propagation Model}
\label{sec:netk_scenario}

We consider a scenario where a long corridor has 
multiple RIS elements located at random locations on the ceiling. Transmitters and receivers are distributed randomly with uniform density across the corridor. With $N_t$ transmit antenna,  $N_r$ receive antenna and $N$ RIS elements, the channel matrix $C$ in case of RIS-assisted system as given in~\cite{basar2020SimRIS}.
\begin{equation}
C(\Phi)=G\Phi H+D,\label{eq_C}
\end{equation}
where $H \in \mathbb{C}^{N \times N_t}$ is the channel coefficient matrix between the transmitter and RIS, $G \in \mathbb{C}^{N_r \times N}$ is the channel coefficient matrix between the RIS and the receiver, and $D \in \mathbb{C}^{N_r \times N_t}$ is the channel coefficient matrix for the direct channel between the transmitter and receiver. The $\Phi$ is a diagonal matrix which represents the reconfigurable responses of the RIS elements and optimizing this matrix will enable RIS to improve the system performance, \ie,\, the achievable rate (R) for the system is given in \cite{basar2020SimRIS_3} as 
\begin{equation}
R=\max_{\Phi} \log_2\Big|I_{Nr}+\frac{P_t}{\sigma^2}C^\dagger (\Phi)C(\Phi)\Big| \mbox{\qquad bits/s/Hz}.\label{eq_rate}
\end{equation}
where $C^\dagger$ is the complex conjugate transpose of $C$ matrix, $P_t$ is the transmitted power in Watts, $\sigma^2$ is the average noise power. For our system the noise power is assumed to be -100 dBm and the transmission power is set at 30 dBm.
We are considering a SISO system, $N_t = N_r = 1$. By considering a SISO system it is easier to realize the exact phase equalization for the system whereas phase equalization in MIMO systems is restricted by an optimization problem.


\section{RIS Approximation Model}

We start with a single RIS element scenario and extend it to multiple RIS elements distributed randomly over the ceiling of the service area. Consider the scenario shown in Fig.~\ref{fig_setup1} where a RIS element is shown deployed at a height $z$ and at a horizontal distance $y$ from the transmitter when the distance between the transmitter and the receiver is $r$.

\begin{figure}[htbp]
\centerline{\includegraphics[width=8cm]{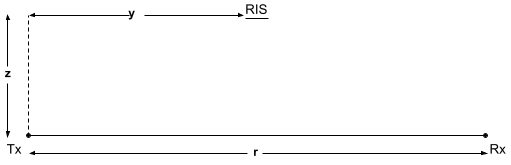}}
\caption{Deployment setup for approximating the RIS-assisted model.}
\label{fig_setup1}
\end{figure}

%




Assume that the transmitter is equipped with only one transmit antenna (extension to multiple antenna is straightforward, but complicates the equations used). 
The total power received at the receiver in case of such RIS-assisted system is given by
\begin{equation}
P(r,z;y)=P_{LOS}(r)+P_{RIS}(r,z;y)\label{eq_T},
\end{equation}
where $P_{LOS}(r)$ is the received power at the receiver only through the direct line of sight link between the transmitter and the receiver and $P_{RIS}(r,z;y)$ is the \textit{additional} power only due to the \emph{scattering} of the EM-waves through RIS. Note that, we implicitly assumed that the two components of power are independent and the receiver is able to combine the two signals optimally while compensating for the difference in phases.
The \emph{expected} received power via the RIS element can be represented as
\begin{equation}
P_{RIS}(r,z;y)=\frac{S}{(y^2+z^2)((r-y)^2+z^2)}, \label{eq:P_ris_avg}
\end{equation}
where $S$ is a constant for a given RIS element, transmitter-receiver pair and depends on the respective antenna gains and scattering profile. The randomness introduced by the environment (in terms of fading) is incorporated in $S$ as we are looking at average values).


\noindent\textit{Scaling property and an approximation:} Now we can see some scaling properties of RIS assisted systems. We can see that $P_{RIS}(r,z;y)$ can be normalized to obtain a scaled version of \eqref{eq:P_ris_avg} with respect to $r$, by substituting $p=\frac{y}{r}$ and $q=\frac{z}{r}$.\ie,
\begin{align}
P_{RIS}(r,z;y)&=\left(\frac{q}{z}\right)^4P_{RIS}(1,q;p) =\frac{1}{r^4}\widehat{P}_{RIS}^{Av}(\frac{z}{r};\frac{y}{r})
\end{align}	
where $\widehat{P}_{RIS}^{Av}(q;p)$ is the additional average received power of normalized system. Thus the average power of the original system can be obtained by scaling the normalized system, which is independent of $r$ and hence, one needs to study only the function $\widehat{P}_{RIS}^{Av}(q;p)$, $p\in [0,1]$, $q\geq 0$. Now we can see an interesting structure which proves that using a value of $q>\frac{1}{2}$ does not provide any promising return from the RIS deployment.

The extremum points of the function $\widehat{P}_{RIS}^{Av}(q;p)$ for a given value of $q$ are at $p=1/2$ and $p=\frac{1}{2}(1\pm\sqrt{1 - 4q^2})$. This function can be approximated using linear piecewise function as follows (for $q<1/2$):
%


\begin{equation}\label{eq:piecewise}
\frac{\widehat{P}_{RIS}^{Av}(q;p)}{S} = \begin{cases}
\frac{2p}{(1+q^2)(1-\sqrt{1-4q^2})}+\frac{1}{q^2(q^2+1)};\\ \qquad 0\leq p\leq \frac{1}{2}(1-\sqrt{1 - 4q^2})\\
\frac{2(16q^4-8q^2+1)(1/2-p)}{q^2\sqrt{1-4q^2}(4q^2+1)^2}+\frac{16}{(4q^2+1)^2}; \\
\qquad \frac{1}{2}(1-\sqrt{1 - 4q^2}) < p \leq \frac{1}{2},
\end{cases}
\end{equation}

and, similarly for $p>1/2$ using the symmetry in the piecewise approximation, we have,
\begin{equation}
\widehat{P}_{RIS}^{Av}(q;p)=\widehat{P}_{RIS}^{Av}(q;1-p) \label{eq:symmetry}
\end{equation}

The approximation in \eqref{eq:piecewise} can be used, when we consider multiple RIS system where the RIS locations follow some distributions, for which expectations over $r, y$ are to be taken, those may not be simplified to closed form expressions. We can approximate the integration of \eqref{eq:piecewise} as sum of areas of triangles with this piecewise linear approximations. Similar approximations can be derived for $q\geq 1/2$ also, but as the additional power improvement is negligible in this case and we do not consider this case. 


In the next section, we will consider the scenario where multiple RISs are deployed on the ceiling. 
\section{Network with Random RIS Deployment}

Assume that the RIS are distributed along the corridor according to an independent Poisson process of intensity $\rho$.
The average additional received power can be obtained   by Campbell's theorem\footnote{Campbell's theorem: $\E[\sum_{x\in\Phi}f(x)]=\beta\int_{\mathbb{R}^d }f(x)dx$.} \cite{stoyan2013stochastic} and we have the following theorem. 
\begin{theorem} The average additional received power at a distance $r$ from transmitter, obtained by deploying Poisson distributed RIS elements with density $\rho$ on a ceiling of height $z$ m from transmit-receiver plane is given by,
	\begin{align}P_{RIS}^{Av}(r,z)&=2G\rho\frac{ \log \left(\left(\frac{r^2 \left(\frac{z^2}{r^2}+1\right)}{z^2}\right)^{\frac{z}{r}}\right)+\tan
		^{-1}\left(\frac{r}{z}\right)}{r^3 \left(\frac{4 z^3}{r^3}+\frac{z}{r}\right)}
	\label{eq:Pav-theorem}
	\end{align}
\end{theorem}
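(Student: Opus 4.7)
The plan is to apply Campbell's theorem (quoted in the footnote) to the single-element average power $f(y) = P_{RIS}(r,z;y) = S/\bigl[(y^2+z^2)((r-y)^2+z^2)\bigr]$ from \eqref{eq:P_ris_avg}: the total additional received power is, by linearity of expectation, the sum of individual RIS contributions, and each summand is exactly $f(y)$ where $y$ is the horizontal position of that element on the ceiling. With $S$ identified with the symbol $G$ used in the statement, Campbell's formula reduces the proof to evaluating the scalar integral
\begin{equation*}
P_{RIS}^{Av}(r,z) = \rho\,G \int_0^r \frac{dy}{(y^2+z^2)\bigl((r-y)^2+z^2\bigr)},
\end{equation*}
where the integration range is the ceiling segment lying between the projections of Tx and Rx, i.e.\ $y\in[0,r]$ in the coordinates of Fig.~\ref{fig_setup1}; the appearance of $\tan^{-1}(r/z)$ (rather than a constant multiple of $\pi$) in the claimed formula is consistent only with a finite integration interval of length $r$.

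The main step is then this one-dimensional integral. I would attack it via partial fractions, writing
\begin{equation*}
\frac{1}{(y^2+z^2)\bigl((r-y)^2+z^2\bigr)} = \frac{Ay+B}{y^2+z^2} + \frac{Cy+D}{(r-y)^2+z^2}
\end{equation*}
and solving the resulting coefficient system. The symmetry $y\mapsto r-y$ of the left-hand side forces $C=-A$, after which matching the remaining three coefficients gives $A = 2/[r(r^2+4z^2)]$, $B = 1/(r^2+4z^2)$, $D = 3/(r^2+4z^2)$. Each of the four resulting summands then integrates elementarily: the $y$-numerator pieces produce $(1/r)\log(y^2+z^2)$ and its shifted analogue, while the constant-numerator pieces produce $(1/z)\tan^{-1}(y/z)$ and its shifted analogue. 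After the substitution $u = r-y$ on the shifted pair, collecting terms over $[0,r]$ gives
\begin{equation*}
\int_0^r \frac{dy}{(y^2+z^2)((r-y)^2+z^2)} = \frac{2}{r^2+4z^2}\!\left[\frac{1}{r}\log\!\frac{r^2+z^2}{z^2} + \frac{1}{z}\tan^{-1}\!\frac{r}{z}\right].
\end{equation*}

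Multiplying by $\rho G$, pulling a $1/z$ out of the bracket, and using the identities $(z/r)\log\bigl((r^2+z^2)/z^2\bigr) = \log\bigl((1+r^2/z^2)^{z/r}\bigr)$ together with $z(r^2+4z^2) = r^3(4z^3/r^3 + z/r)$ reproduces \eqref{eq:Pav-theorem} exactly. The only genuine obstacle I anticipate is algebraic rather than conceptual: keeping the partial-fraction bookkeeping straight and then massaging the clean output into the authors' slightly non-standard factorisation of the denominator. The probabilistic content is a one-line invocation of Campbell's theorem.
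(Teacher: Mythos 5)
Your proposal is correct and follows exactly the route the paper intends: apply Campbell's theorem to the single-element expression \eqref{eq:P_ris_avg} over the ceiling segment $y\in[0,r]$ (consistent with the authors' remark that elements beyond the receiver are ignored) and evaluate the resulting integral, whose closed form $\frac{2}{r^2+4z^2}\bigl[\frac{1}{r}\log\frac{r^2+z^2}{z^2}+\frac{1}{z}\tan^{-1}\frac{r}{z}\bigr]$ matches \eqref{eq:Pav-theorem} after the stated algebraic rewriting. The paper's proof is only the one-line invocation of Campbell's theorem, so your partial-fraction computation simply supplies the details it omits; your coefficients and the final identity check out.
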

\begin{proof} Using \eqref{eq:P_ris_avg} and Campbell's theorem.	
%
\end{proof}
Here we have not considered RIS elements beyond the reciever $(y>r)$, as the reflections from those elements will be negligible as the total path lengths of those cases are higher than the elements of $(0,r)$. 

This theorem can be extended to multidimensional distribution and the next theory provides us an important scaling law for RIS assisted system. Let us assume RIS elements are distributed on the wall with density $\beta$, then the average additional power can be obtained using the following theorem. 

\begin{theorem}
Average additional RIS-assisted received power     \begin{equation}
        P_{RIS}^{Av}=\frac{\beta}{r^2}\int_{q} P_{RIS}^{Av}(1,q) dq
    \label{eq:theorem_2}
    \end{equation}
decreases as $\frac{1}{r^2}$, with $\int_{q} P_{RIS}^{Av}(1,q) dq$ being independent of $r$.
\end{theorem}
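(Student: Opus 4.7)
The plan is to apply Campbell's theorem to the 2D Poisson process on the wall and then reduce the resulting double integral to the form of Theorem~1 by the scaling substitution already used in the single-element analysis. The route is essentially change-of-variables plus a recognition step.

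First I would write, analogously to the proof of Theorem~1,
\begin{equation}
P_{RIS}^{Av}(r) = \beta\iint P_{RIS}(r,z;y)\,dy\,dz,
\end{equation}
with $P_{RIS}(r,z;y)$ as in \eqref{eq:P_ris_avg}. Introducing the normalized coordinates $p = y/r$, $q = z/r$ contributes a Jacobian $dy\,dz = r^2\,dp\,dq$, while the scaling identity $P_{RIS}(r,z;y) = r^{-4}\widehat{P}_{RIS}^{Av}(q;p)$ derived just before \eqref{eq:piecewise} supplies an $r^{-4}$. Combining these,
\begin{equation}
P_{RIS}^{Av}(r) = \frac{\beta}{r^2}\iint \widehat{P}_{RIS}^{Av}(q;p)\,dp\,dq,
\end{equation}
which already exhibits the $1/r^2$ decay because the remaining integral is over dimensionless variables.

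To match the form in \eqref{eq:theorem_2} I would then perform the $p$-integration first. This inner integral is exactly the one-dimensional Campbell integral over $y$ at unit transmitter-receiver separation and normalized ceiling height $q$, with line-density one; in other words it equals $P_{RIS}^{Av}(1,q)$ as given by Theorem~1 (with $\rho$ normalized to $1$, since the overall density factor $\beta$ has been pulled out in front). Substituting this identification yields the claimed identity, and the $q$-integral, being over a dimensionless variable, is manifestly independent of $r$, so the $1/r^2$ decay follows.

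The main obstacle I anticipate is book-keeping rather than algebra. The theorem statement leaves the wall's extent implicit, and for the overall constant to be truly $r$-independent after the substitution, the horizontal extent of the wall must also scale with $r$ (consistent with the truncation $y<r$ noted right after Theorem~1), so that $p$ ranges over a fixed interval such as $[0,1]$; otherwise the limits of the $p$-integral would reintroduce an $r$-dependence. A secondary point is reconciling the line-density $\rho$ appearing in Theorem~1 with the surface density $\beta$ here, by reading $P_{RIS}^{Av}(1,q)$ in \eqref{eq:theorem_2} as the per-unit-density version of the Theorem~1 expression. Once both conventions are fixed, the proof reduces to the change of variables above.
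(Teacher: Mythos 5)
Your proof is correct and follows exactly the route the paper intends but leaves implicit: Campbell's theorem for the planar Poisson process, the substitution $p=y/r$, $q=z/r$ giving the Jacobian factor $r^2$ against the $r^{-4}$ scaling of \eqref{eq:P_ris_avg}, and recognition of the inner $p$-integral as the unit-density Theorem~1 expression at $r=1$. Your remarks on the wall extent scaling with $r$ (so that $p$ ranges over a fixed interval) and on the $\rho$-versus-$\beta$ convention correctly identify the only book-keeping points the paper glosses over.
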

We can see that by using the scaling law, this additional power can be obtained with integral in which variables are multiplicatively separable.

Next proposition provide an approximation for the average additional received power given in \eqref{eq:Pav-theorem}, by using our piecewise linear approximation model given in \eqref{eq:piecewise}.
\begin{prop}The approximated average additional received power in normalized system with Poisson distributed RIS elements is given by
	\begin{align}
\frac{P_{RIS}^{Av}(r,qr)}{S}&\approx\frac{\rho}{r^3}\left(\frac{(2+q^2)(1-\sqrt{1-4q^2})}{2q^2(1+q^2)}\right.\nonumber\\
& + \left.\frac{(16q^4+24q^2+1)\sqrt{1-4q^2}}{2q^2(1+4q^2)^2}\right).
	\label{eq:int_piece}
	\end{align}
\end{prop}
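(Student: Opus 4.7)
The plan is to carry the scaling identity together with Campbell's theorem through the piecewise linear approximation \eqref{eq:piecewise} and then integrate each linear segment exactly. Applying Campbell's theorem to the 1-D Poisson process of intensity $\rho$ on $[0,r]$ (ignoring elements beyond the receiver, as in the previous theorem) and using the scaling $P_{RIS}(r,z;y)=r^{-4}\widehat{P}_{RIS}^{Av}(z/r;y/r)$ after the substitution $p=y/r$, I obtain
\begin{equation*}
\frac{P_{RIS}^{Av}(r,qr)}{S} \;=\; \frac{\rho}{r^3}\int_0^1 \frac{\widehat{P}_{RIS}^{Av}(q;p)}{S}\,dp,
\end{equation*}
so the whole claim reduces to evaluating the integral on the right using the piecewise expression in \eqref{eq:piecewise}.

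Next I would use the symmetry property \eqref{eq:symmetry} to write the integral as $2\int_0^{1/2}\!\widehat{P}_{RIS}^{Av}(q;p)/S\,dp$, and split the remaining interval at the breakpoint $p^{\star}:=\tfrac{1}{2}(1-\sqrt{1-4q^2})$ to match the two pieces of \eqref{eq:piecewise}. Since each piece is affine in $p$, each sub-integral equals a trapezoid area, obtained as base times the average of the two endpoint values. Before combining, I would check that the two branches agree at $p^{\star}$: a short substitution, using the identity $16q^4-8q^2+1=(1-4q^2)^2$, shows both branches collapse there to the common value $1/q^2$, which also matches the exact value $\widehat{P}_{RIS}^{Av}(q;p^{\star})/S$.

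The two trapezoid contributions are
\begin{equation*}
I_1 \;=\; \tfrac{p^{\star}}{2}\!\left[\tfrac{1}{q^2(1+q^2)}+\tfrac{1}{q^2}\right],
\end{equation*}
\begin{equation*}
I_2 \;=\; \tfrac{1/2-p^{\star}}{2}\!\left[\tfrac{1}{q^2}+\tfrac{16}{(1+4q^2)^2}\right],
\end{equation*}
with $1/2-p^{\star}=\tfrac{1}{2}\sqrt{1-4q^2}$. Combining numerators inside the brackets yields $(q^2+2)/[q^2(q^2+1)]$ for $I_1$ and, via the identity $(1+4q^2)^2+16q^2=16q^4+24q^2+1$, the compact form $(16q^4+24q^2+1)/[q^2(1+4q^2)^2]$ for $I_2$. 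Doubling for the symmetry and multiplying by $\rho/r^3$ then reproduces \eqref{eq:int_piece} exactly.

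The main obstacle is purely algebraic bookkeeping: keeping track of the factors of $1/2$ coming from the trapezoid rule and the factor of $2$ from the symmetry, and spotting the two polynomial identities $(1-4q^2)^2=16q^4-8q^2+1$ and $(1+4q^2)^2+16q^2=16q^4+24q^2+1$ that collapse the awkward sums of fractions onto the common denominators $q^2(1+q^2)$ and $q^2(1+4q^2)^2$. Once these simplifications are made, no estimation step is needed, since the trapezoid rule is exact on each linear piece; the only approximation in the proposition is the one already introduced in \eqref{eq:piecewise}.
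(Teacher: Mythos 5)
Your proposal is correct and follows exactly the route the paper intends (the paper gives no explicit proof, only the remark that the integral of \eqref{eq:piecewise} can be computed as areas under the piecewise linear segments): Campbell's theorem plus the $1/r^4$ scaling reduces the claim to $\frac{\rho}{r^3}\int_0^1 \widehat{P}_{RIS}^{Av}(q;p)/S\,dp$, and your exact trapezoidal evaluation of the two affine pieces, together with the symmetry factor of $2$ and the identities $(1-4q^2)^2=16q^4-8q^2+1$ and $(1+4q^2)^2+16q^2=16q^4+24q^2+1$, reproduces \eqref{eq:int_piece} precisely. Your consistency check that both branches equal $1/q^2$ at the breakpoint is a nice verification the paper does not spell out.
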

%





In Fig. \ref{fig:tadd_for_varying_y2}, we evaluate the achievable rate for different positions of RIS operating in indoor environment using the SimRIS channel simulator. For this experiment, we consider $r = 50$ meters, which means that the Tx and Rx are fixed at a distance $50$ m apart, $z$ which is the perpendicular distance of RIS from the line connecting Tx and Rx, and for each valus of z, we vary y which is the distance of RIS from the origin along the horizontal axis and examine the effect of moving the RIS between Tx and Rx at a perpendicular distance. Upon analysing we observe two peaks in achievable rate which suggests that the ideal position of RIS placement is nearer to the transmitter or receiver. As we can see from the Fig. \ref{fig:tadd_for_varying_y2} that the RIS can provide a remarkable improvement in the achievable rate. The minimum improvement in achievable rate is $1$ unit which accounts for approximately $15\%$ improvement in this case.
\begin{figure}[htbp]
\centerline{\includegraphics[width=7cm]{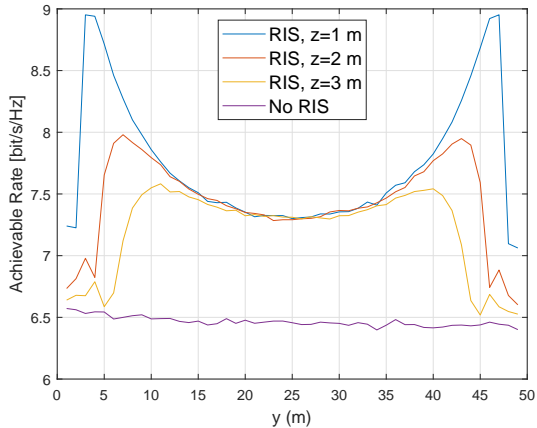}}
\caption{Achievable rates of RIS-assisted system and No-RIS system in bit/sec/Hz under varying value of y with z = 1 m, z = 2 m and z = 3 m. The results are obtained using SimRIS simulator.}
\label{fig:tadd_for_varying_y2}
\end{figure}

 From \eqref{eq:piecewise}, it can be observed that for $q<1/2$ the function has $1$ minimum at $p=1/2$ and 2 maxima whereas for $q>1/2$ the function has only one maxima at $p=1/2$. This structural behaviour of the $P_{RIS}$ function can be observed in results from SimRIS with scatterers (Fig.~\ref{fig:tadd_for_varying_y2}). As we decrease the value of $q$ the two maxima move towards the two end points, thus if we place RIS nearer to the vertical plane on which transmitter and receiver lie, the value of $P_{RIS}^{Av}$ is greater nearer to the end-points which suggests that optimal position for RIS is near to the transmitter and the receiver. Hence, in the remaining numerical results presented in the paper, we concentrate on the region $q<\frac{1}{2}$.






\begin{figure}[htbp]
\centerline{\includegraphics[width=7cm]{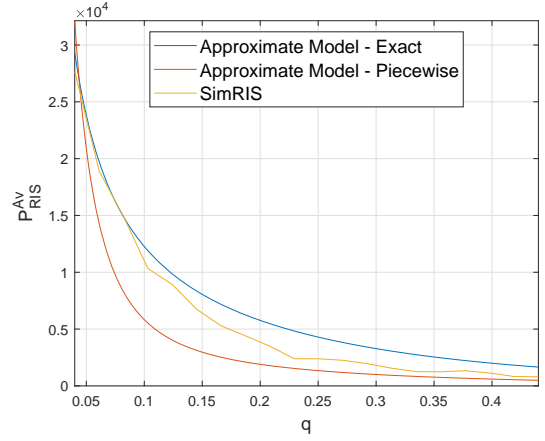}}
\caption{Comparison of results from SimRIS, actual approximation and the piecewise approximation.}
\label{fig:all3}
\end{figure}

Fig. \ref{fig:all3} gives the $P^{Av}_{RIS}(r,z)$ from \eqref{eq:Pav-theorem}, 
the piecewise approximation given in \eqref{eq:int_piece}, and the results from SimRIS. The simulation assumes $\rho=1$ RIS/meter, each RIS had $256$ RIS elements. The value of $S$ was back-calculated from that of simulation for $q\approx 0$. The plot demonstrates that the piecewise approximation provides reliable performance improvement results for smaller values of $q$. The match between approximation and exact can be improved by extending the piecewise approximation, but that exercise is beyond the scope of this paper. 

 \definecolor{mycolor1}{rgb}{0.00000,0.44700,0.74100}%
 \definecolor{mycolor2}{rgb}{0.85000,0.32500,0.09800}%

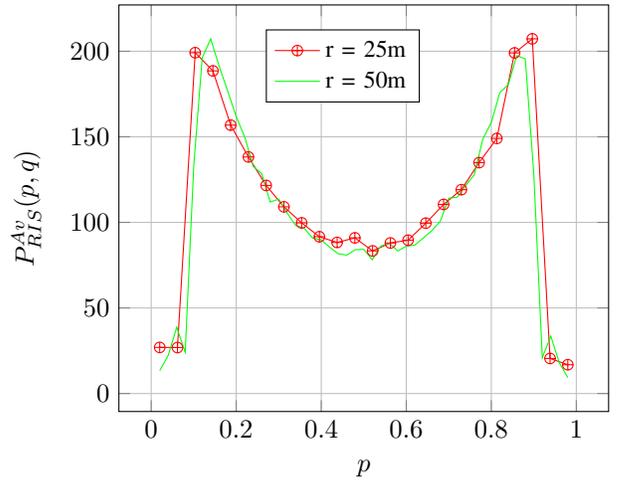
\begin{figure}[h]
	\centering
	\begin{tikzpicture}
	\begin{axis}[scale=0.95,
	grid = both,
	legend style={font=\small,
	at={(0.3,.85)},anchor=west,
	},
	xlabel ={$p$},
	ylabel = {$P^{Av}_{RIS}(p,q)$}
	]
	
	\addplot [color=mycolor2,mark=oplus,color=red]
	table[row sep=crcr]{%
		0.02	26.9510878141508\\
		0.0617391304347826	26.8634151166129\\
		0.103478260869565	199.172401236612\\
		0.145217391304348	188.524208268752\\
		0.18695652173913	156.83381817788\\
		0.228695652173913	138.246716616585\\
		0.270434782608696	121.596056577965\\
		0.312173913043478	109.091192723485\\
		0.353913043478261	99.7191046886975\\
		0.395652173913044	91.6300671742437\\
		0.437391304347826	88.292091496021\\
		0.479130434782609	90.9578079407022\\
		0.520869565217391	83.3373915629672\\
		0.562608695652174	87.9752660143608\\
		0.604347826086956	89.6120411493479\\
		0.646086956521739	99.5649896978014\\
		0.687826086956522	110.496311758975\\
		0.729565217391304	119.108066090375\\
		0.771304347826087	134.928203319915\\
		0.81304347826087	149.050053784925\\
		0.854782608695652	199.02604410764\\
		0.896521739130435	207.24838371891\\
		0.938260869565217	20.4929623632637\\
		0.98	16.8356036924734\\
	};
	\addlegendentry{r = 25m}
	\addplot [color=mycolor1,mark=triangle,no marks,color=green]
	table[row sep=crcr]{%
		0.02	13.3593138251599\\
		0.04	22.2939823010302\\
		0.06	38.7362756377499\\
		0.08	24.12285127684\\
		0.1	132.63011770562\\
		0.12	196.1815124399\\
		0.14	207.24838371891\\
		0.16	191.09604925329\\
		0.18	176.68303256828\\
		0.2	161.84174239235\\
		0.22	149.8346405165\\
		0.24	132.6983429072\\
		0.26	128.45772967942\\
		0.28	111.87773000357\\
		0.3	113.63891427659\\
		0.32	105.43363814829\\
		0.34	98.20364261347\\
		0.36	96.5379818475099\\
		0.38	90.83112898584\\
		0.4	89.7900989786299\\
		0.42	85.44921730485\\
		0.44	81.6544271323\\
		0.46	80.9109154119101\\
		0.48	84.0239981518602\\
		0.5	84.2305835334598\\
		0.52	78.12157715988\\
		0.54	86.15408107345\\
		0.56	87.8599861831499\\
		0.58	83.3128574719001\\
		0.6	86.11746251155\\
		0.62	86.7061333373501\\
		0.64	90.89913725933\\
		0.66	95.12123980004\\
		0.68	100.67032959882\\
		0.7	114.19549563331\\
		0.72	114.71870866416\\
		0.74	121.63605047886\\
		0.76	127.87592608528\\
		0.78	148.61400602132\\
		0.8	158.30396804009\\
		0.82	175.75802336\\
		0.84	180.52642539936\\
		0.86	197.40669823427\\
		0.88	195.55169015607\\
		0.9	128.51832811202\\
		0.92	20.95797166131\\
		0.94	33.55741771972\\
		0.96	18.0219810889901\\
		0.98	9.29721558892993\\
	};
	\addlegendentry{r = 50m}
	\end{axis}
	\end{tikzpicture}
	\caption{Illustration of the equivalence of scaled systems. Plot of $P_{RIS}$ for normalized system with varying value of $p$. }.
	\label{fig:equivalence}
\end{figure}
%
Fig.~\ref{fig:equivalence} shows that the results for different combinations of $y$, $z$ and $r$ depend only on the values of $p = \frac{y}{r}$ and $q=\frac{z}{r}$, up to an appropriate multiplicative constant. This is a considerable simplification as we now have the basic invariants of the system.
\section{RIS Assignment Algorithm}
Assuming a noise limited scenario, when there are many users i.e., Tx-Rx pairs, then there must be a scheduler that has to assign RIS to a particular pair (optimizing $\Phi$ for some of the available RIS for that Tx-Rx pair). The scheduler must take inputs of locations of Tx, Rx and available RIS. For this, we propose \eqref{eq:piecewise} to assign only those RIS elements that can achieve higher throughput as in Fig \ref{fig:tadd_for_varying_y2}.

Let us assume that we wish to ensure that the additional power received is independent of $z$ for a given value of $r$, denote this additional power as $P^*$. Then, we would like to assign the RIS close to $q\approx 0$ to support this communication. Thus, for a given $q$, from simulations $P_{min}(q) = P(0.5,q)$ and if $P_{max}(q) = \max_p P(p,q)$ and $p^*(q) = \argmax_{p < \frac{1}{2}} P(p,q)$ then we seek the level sets of $P(p,q)$ such that 
if we require 
\[P({p^*(q)-\Delta_l},q) = P({p^*(q)+\Delta_r},q)\]
then the total additional power is $P^*$, i.e., $\int_{p={p^*(q)-\Delta_l}}^{{p^*(q)+\Delta_r}} = P^*$.
We denote $\Delta=\Delta_l+\Delta_r$, and 
\begin{equation}
\begin{aligned}
x^*(q) & = \frac{P({p^*(q)-\Delta_l},q) - P_{min}(q)}{P_{max}(q) - P_{min}(q)}  \\
& = \frac{P({p^*(q)+\Delta_r},q) - P_{min}(q)}{P_{max}(q) - P_{min}(q)}.
\end{aligned}
\end{equation}


Fig.~\ref{fig:delta_percent} provides the value of $x^*(q)$ as a function of $q$ from SimRIS, the piecewise linear approximation \eqref{eq:piecewise}, and \eqref{eq:P_ris_avg}. For $q=0.04$, corresponding to $z= 1$m for $r=25$m, the value of $x^*(0.04)=0.85$ was set and the corresponding value of $P^*$ was obtained. This value of $P^*$ was used to get the $x^*(q)$ for the other values of $q$. The it is seen that $x^*(q)$ decreases with $q$ almost linearly. For the same approach, the values of $\Delta$ as a function of $q$ are provided in Fig.~\ref{fig:delta_base}.
\begin{figure}[htbp]
\centerline{\includegraphics[width=7cm]{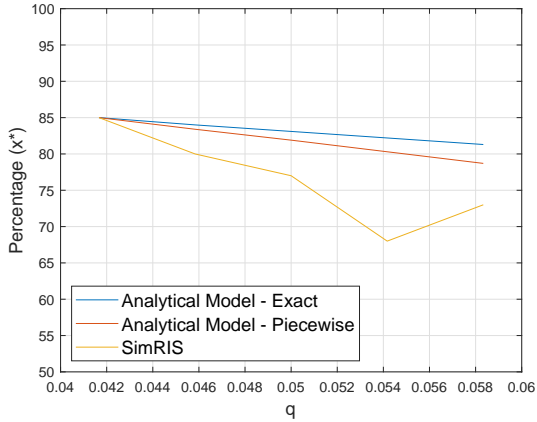}}
\caption{Figure providing the behavior of $x^*(q)$ from SimRIS, Piecewise linear approximation and the exact analytical model.}
\label{fig:delta_percent}
\end{figure}
%






\begin{figure}[htbp]
\centerline{\includegraphics[width=7cm]{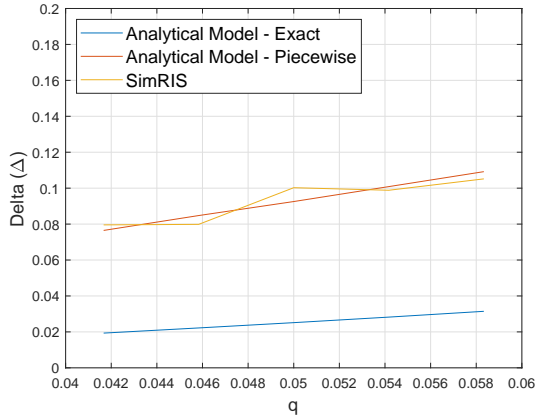}}
\caption{Figure providing the behavior of $\Delta$ from SimRIS, Piecewise linear approximation and the exact analytical model.}
\label{fig:delta_base}
\end{figure}






This gives an interesting phenomenon. We can see that smaller values of $q$ provide higher maximum additional throughput, but the value of $\Delta$, i.e., the usable span, also reduces significantly. This means that, when there is a high enough density of Tx-Rx pairs, we may want to have a small value of $q$ so that the smaller value of $\Delta$ can be utilized for multiple Tx-Rx pairs.







\section{Conclusions}

We considered a RIS-assisted communication and provided an approximation for the improvement. We provided structural results and a way of using these approximations in assigning RIS to individual Tx-Rx pairs. We believe that several such approximations can be derived now and online RIS assignment algorithms can be implemented.



\bibliographystyle{IEEEtran}
\bibliography{main}

\begin{thebibliography}{10}
\providecommand{\url}[1]{#1}
\csname url@samestyle\endcsname
\providecommand{\newblock}{\relax}
\providecommand{\bibinfo}[2]{#2}
\providecommand{\BIBentrySTDinterwordspacing}{\spaceskip=0pt\relax}
\providecommand{\BIBentryALTinterwordstretchfactor}{4}
\providecommand{\BIBentryALTinterwordspacing}{\spaceskip=\fontdimen2\font plus
\BIBentryALTinterwordstretchfactor\fontdimen3\font minus
  \fontdimen4\font\relax}
\providecommand{\BIBforeignlanguage}[2]{{%
\expandafter\ifx\csname l@#1\endcsname\relax
\typeout{** WARNING: IEEEtran.bst: No hyphenation pattern has been}%
\typeout{** loaded for the language `#1'. Using the pattern for}%
\typeout{** the default language instead.}%
\else
\language=\csname l@#1\endcsname
\fi
#2}}
\providecommand{\BIBdecl}{\relax}
\BIBdecl

\bibitem{ris_how}
M.~{Di Renzo}, A.~{Zappone}, M.~{Debbah}, M.~S. {Alouini}, C.~{Yuen}, J.~{de
  Rosny}, and S.~{Tretyakov}, ``Smart radio environments empowered by
  reconfigurable intelligent surfaces: How it works, state of research, and the
  road ahead,'' \emph{IEEE Journal on Selected Areas in Communications},
  vol.~38, no.~11, pp. 2450--2525, 2020.

\bibitem{towards}
Q.~{Wu} and R.~{Zhang}, ``Towards smart and reconfigurable environment:
  Intelligent reflecting surface aided wireless network,'' \emph{IEEE
  Communications Magazine}, vol.~58, no.~1, pp. 106--112, 2020.

\bibitem{an_idea}
M.~D. Renzo, M.~Debbah, D.-T. Phan-Huy, A.~Zappone, M.-S. Alouini, C.~Yuen,
  V.~Sciancalepore, G.~C. Alexandropoulos, J.~Hoydis, H.~Gacanin, J.~de~Rosny,
  A.~Bounceu, G.~Lerosey, and M.~Fink, ``Smart radio environments empowered by
  ai reconfigurable meta-surfaces: An idea whose time has come,'' 2019.

\bibitem{beat_relay}
E.~{Björnson}, .~{Özdogan}, and E.~G. {Larsson}, ``Intelligent reflecting
  surface versus decode-and-forward: How large surfaces are needed to beat
  relaying?'' \emph{IEEE Wireless Communications Letters}, vol.~9, no.~2, pp.
  244--248, 2020.

\bibitem{compare_relay}
A.~A. {Boulogeorgos} and A.~{Alexiou}, ``Performance analysis of reconfigurable
  intelligent surface-assisted wireless systems and comparison with relaying,''
  \emph{IEEE Access}, vol.~8, pp. 94\,463--94\,483, 2020.

\bibitem{act_pass_beam}
Q.~{Wu} and R.~{Zhang}, ``Intelligent reflecting surface enhanced wireless
  network: Joint active and passive beamforming design,'' in \emph{2018 IEEE
  Global Communications Conference (GLOBECOM)}, 2018, pp. 1--6.

\bibitem{ap_trans}
E.~Basar, ``Transmission through large intelligent surfaces: A new frontier in
  wireless communications,'' 2019.

\bibitem{achieve_max}
C.~{Huang}, A.~{Zappone}, M.~{Debbah}, and C.~{Yuen}, ``Achievable rate
  maximization by passive intelligent mirrors,'' in \emph{2018 IEEE
  International Conference on Acoustics, Speech and Signal Processing
  (ICASSP)}, 2018, pp. 3714--3718.

\bibitem{dl_phase}
B.~{Sheen}, J.~{Yang}, X.~{Feng}, and M.~M.~U. {Chowdhury}, ``A deep learning
  based modeling of reconfigurable intelligent surface assisted wireless
  communications for phase shift configuration,'' \emph{IEEE Open Journal of
  the Communications Society}, vol.~2, pp. 262--272, 2021.

\bibitem{low_comp}
Z.~{Yigit}, E.~{Basar}, and I.~{Altunbas}, ``Low complexity adaptation for
  reconfigurable intelligent surface-based mimo systems,'' \emph{IEEE
  Communications Letters}, vol.~24, no.~12, pp. 2946--2950, 2020.

\bibitem{phase_error}
M.~{Badiu} and J.~P. {Coon}, ``Communication through a large reflecting surface
  with phase errors,'' \emph{IEEE Wireless Communications Letters}, vol.~9,
  no.~2, pp. 184--188, 2020.

\bibitem{error_model}
H.~{Taghvaee}, A.~{Cabellos-Aparicio}, J.~{Georgiou}, and S.~{Abadal}, ``Error
  analysis of programmable metasurfaces for beam steering,'' \emph{IEEE Journal
  on Emerging and Selected Topics in Circuits and Systems}, vol.~10, no.~1, pp.
  62--74, 2020.

\bibitem{power_scaling}
\BIBentryALTinterwordspacing
E.~Bjornson and L.~Sanguinetti, ``Power scaling laws and near-field behaviors
  of massive mimo and intelligent reflecting surfaces,'' \emph{IEEE Open
  Journal of the Communications Society}, vol.~1, p. 1306–1324, 2020.
  [Online]. Available: \url{http://dx.doi.org/10.1109/OJCOMS.2020.3020925}
\BIBentrySTDinterwordspacing

\bibitem{2021_performance}
M.~Toumi and A.~Aijaz, ``System performance insights into design of
  ris-assisted smart radio environments for 6g,'' 2021.

\bibitem{basar2020SimRIS}
\BIBentryALTinterwordspacing
E.~{Basar} and I.~{Yildirim}, ``{SimRIS} channel simulator for reconfigurable
  intelligent surface-empowered communication systems,'' May 2020. [Online].
  Available: \url{https://arxiv.org/abs/2006.00468}
\BIBentrySTDinterwordspacing

\bibitem{basar2020SimRIS_3}
\BIBentryALTinterwordspacing
E.~Basar and I.~Yildirim, ``Reconfigurable intelligent surfaces for future
  wireless networks: A channel modeling perspective,'' March 2021. [Online].
  Available: \url{https://arxiv.org/abs/2008.01448}
\BIBentrySTDinterwordspacing

\bibitem{stoyan2013stochastic}
S.~Chiu, D.~Stoyan, W.~Kendall, and J.~Mecke, \emph{Stochastic Geometry and Its
  Applications}, ser. Wiley Series in Probability and Statistics.\hskip 1em
  plus 0.5em minus 0.4em\relax Wiley, 2013.

\end{thebibliography}

\end{document}